\numberwithin{equation}{section}
\DeclareMathOperator{\R}{\mathbb{R}} 
\newcommand{\eps}{\varepsilon} 
\newcommand{\Beta}{\mathrm{Beta}} 
\newcommand{\ind}[1]{\mathbbm{1}\left\{#1\right\}}
\newcommand{\wt}{\widetilde} 
\newtheorem{theorem}{Theorem}[section]
\newtheorem{lemma}[theorem]{Lemma}
\newtheorem{example}[theorem]{Example}
\begin{document}

\title{Network disruption: maximizing disagreement and polarization in social networks}
\author{
	Mayee F. Chen 
	\thanks{Stanford University; \texttt{mfchen@stanford.edu}.} 
	\and
	Mikl\'os Z.\ R\'acz
	\thanks{Princeton University; \texttt{mracz@princeton.edu}. Research supported in part by NSF grant DMS 1811724 and by a Princeton SEAS Innovation Award.}
}
\date{\today}

\maketitle


\begin{abstract} 
Recent years have seen a marked increase in the spread of misinformation, a phenomenon which has been accelerated and amplified by social media such as Facebook and Twitter. While some actors spread misinformation to push a specific agenda, it has also been widely documented that others aim to simply disrupt the network by increasing disagreement and polarization across the network and thereby destabilizing society. Popular social networks are also vulnerable to large-scale attacks. Motivated by this reality, we introduce a simple model of \emph{network disruption} where an adversary can take over a limited number of user profiles in a social network with the aim of maximizing disagreement and/or polarization in the network. 

We investigate this model both theoretically and empirically. We show that the adversary will always change the opinion of a taken-over profile to an extreme in order to maximize disruption. We also prove that an adversary can increase disagreement / polarization at most linearly in the number of user profiles it takes over. Furthermore, we present a detailed empirical study of several natural algorithms for the adversary on both synthetic networks and real world (Reddit and Twitter) data sets. These show that even simple, unsophisticated heuristics, such as targeting centrists, can disrupt a network effectively, causing a large increase in disagreement / polarization. Studying the problem of network disruption through the lens of an adversary thus highlights the seriousness of the problem. 
\end{abstract}


\section{Introduction} \label{sec:intro} 
 
 Recent years have seen a marked increase in the spread of misinformation, 
a phenomenon which has been accelerated and amplified by social media such as Facebook and Twitter. 
This problem has been widely studied empirically~\cite{AG17,fourney2017geographic,guessfakenews,spangher2018analysis,vosoughi2018spread}. 
By and large, the main solution proposed to tackle the spread of misinformation is to develop automated fake news detection tools (e.g.,~\cite{tacchini2017some}). 
However, there are huge challenges to overcome to make this viable. To start, simply defining what is false vs.\ true is often controversial and by now has been hugely politicized. 
Moreover, rapid advances in machine learning have made possible the creation of fake audio and video that are convincingly realistic, 
hence the problem of detection will only become worse in the coming years.

In this paper we consider a completely different angle. 
While some actors spread misinformation to push a specific agenda, it has also been widely documented 
\cite{facebook,USHPSCI} that others aim to simply disrupt the network by increasing disagreement and polarization across the network, thereby destabilizing society. 
 Popular social networks are also vulnerable to large-scale attacks---in September 2018 it was revealed that nearly 50 million Facebook users were compromised in a data breach where attackers had the ability to take over accounts~\cite{guardian}. 
Motivated by this reality, we introduce a simple model of \emph{network disruption} where an adversary can take over some user profiles in a social network with the aim of maximizing disagreement and/or polarization in the network. 

Does the adversary have to be sophisticated to cause significant disruption? Or can they achieve their goal via simple, unsophisticated heuristics? 
How do the answers to these questions depend on properties of the underlying social network? 
The goal of this paper is to study such questions. 
By studying the problem through the lens of an adversary we are able to gain insight into the seriousness of the problem. 

\subsection{An adversarial model of network disruption} \label{sec:model} 

Our key conceptual contribution is the introduction of a novel adversarial model of network disruption. 
We model the underlying social network as a weighted graph $G = (V,E,w)$, 
where $V$ is the set of vertices, corresponding to the users of the social network, 
$E$ is the set of edges, connecting users who know each other, 
and $w : E \to [0,1]$ is a weight function on the edges that describes the strength of the ties between users. 
Now consider a topic that everyone has an opinion about---gun ownership, the amount of taxation, or your favorite controversial topic. 
We assume that everyone has an \emph{innate opinion} about this topic and that this opinion can be quantified by a number in the interval $[0,1]$; 
for instance, $0$ corresponds to strict gun control while $1$ corresponds to no gun control. 
The innate opinions are denoted by $s = \left\{ s_{v} \right\}_{v \in V} \in \left[ 0, 1 \right]^{V}$. 

People interact with their acquaintances on the social network and exchange opinions. 
As a result, their opinions
evolve and finally reach an \emph{equilibrium}, 
which we denote 
by $z = \left\{ z_{v} \right\}_{v \in V} \in \left[ 0, 1 \right]^{V}$. 
To be specific, in this paper we consider a simple model of opinion dynamics---known as the Friedkin-Johnsen model~\cite{friedkin1990social}---where users iteratively update their opinions by taking a weighted average of the opinions of their friends and their innate opinion. This results in the equilibrium opinions being 
$z = \left( I + L \right)^{-1} s$, where $I$ is the identity matrix and $L$ is the (weighted) Laplacian matrix. 
We emphasize that, while we focus on the Friedkin-Johnsen model in this paper, 
everything we consider can be studied for other opinion dynamics models as well.

The equilibrium opinions $z$ have various properties that we care about. 
Following~\cite{musco2018minimizing}, we introduce the following two important quantities. 
\emph{Disagreement} is defined as 
\begin{equation}\label{eq:D}
D \equiv D(z) := \sum_{\left( u,v \right) \in E} w_{u,v} \left( z_{u} - z_{v} \right)^{2};
\end{equation}
this measures how much acquaintances disagree in their opinions, globally across the network. 
\emph{Polarization} is defined as 
\begin{equation}\label{eq:P}
P \equiv P(z) := \sum_{v \in V} \left( z_{v} - \overline{z} \right)^{2}, 
\end{equation}
where $\overline{z} := \tfrac{1}{\left| V \right|} \sum_{v \in V} z_{v}$ is the mean opinion; 
in other words, $P$ is the variance of the opinions, multiplied by the number of vertices. 

\begin{figure}[ht!]
\includegraphics[width=\textwidth]{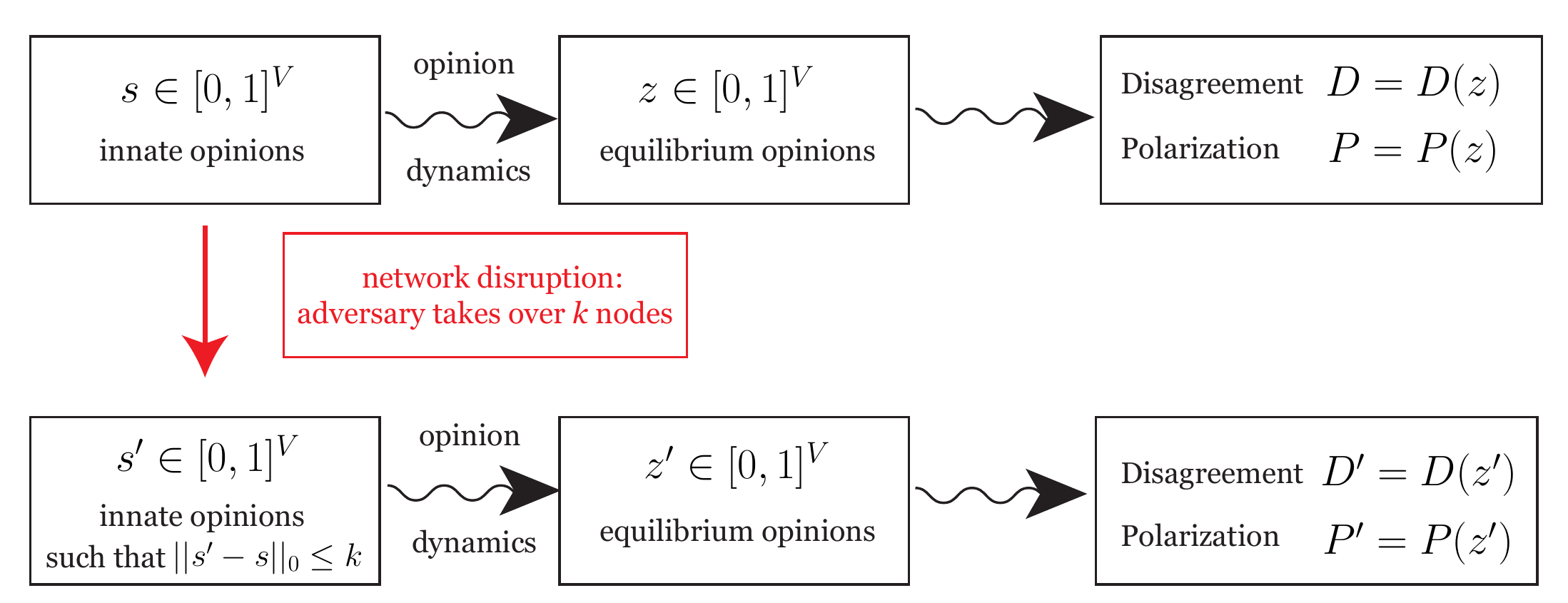}
\caption{\textbf{Schematic of the adversarial model of network disruption.} \emph{Top:} On a particular topic everyone has an innate opinion, resulting in the innate opinion vector $s \in [0,1]^{V}$. These are mapped to equilibrium opinions $z \in [0,1]^{V}$ via the opinion dynamics. The equilibrium opinions give rise to natural quantities: disagreement $D$ and polarization $P$. \emph{Bottom:} The adversary can take over at most $k$ nodes in the network and change their innate opinions, resulting in the new innate opinion vector $s' \in [0,1]^{V}$. The opinion dynamics are unchanged, resulting in new equilibrium opinions $z' \in [0,1]^{V}$, and subsequently new values of disagreement $D'$ and polarization~$P'$. The goal of the adversary is to \emph{maximize} disagreement and/or polarization.}
\label{fig:schematic}
\end{figure}

\textbf{Modeling the adversary.} We now turn to modeling network disruption, which is the key new idea introduced in the paper. 
We consider an adversary who can take over $k$ nodes of the network and modify the innate opinions of these nodes arbitrarily. 
That is, the adversary can select $s' \in [0,1]^{V}$ such that $\left\| s' - s \right\|_{0} \leq k$. 
For instance, this can model a hacker who takes over a set of Facebook profiles. 
The adversary does not want to raise suspicion and so the opinion dynamics remain unchanged. 
Therefore, assuming the Friedkin-Johnsen model, the resulting equilibrium opinions will be $z' = (I + L)^{-1} s'$ 
and these will result in new values of disagreement $D'$ and polarization $P'$. 
The goal of the adversary is to pick $s'$ in such a way that \emph{maximizes} disagreement $D'$ or polarization~$P'$. 
See Figure~\ref{fig:schematic} for an illustration.

\textbf{Questions and challenges.} 
What is the optimal solution for the adversary? 
That is, how should they pick the set of $k$ vertices to hack, and how should they set the innate opinions of hacked vertices? 

Our first result shows that any optimal solution will set the innate opinions to an extreme; 
that is, if $s'_v \neq s_v$ then $s'_v \in \left\{0,1\right\}$. 
Thus a brute force approach can find an optimal solution by checking all 
$\binom{n}{k} 2^{k}$ possibilities, where $n$ denotes the number of vertices. 
This is not feasible when $k$ is large--- so is there an efficient (polynomial time in~$n$) algorithm to find an optimal solution? 
One can show that the function that we are maximizing is not submodular (see Section~\ref{sec:convexity}) and hence off-the-shelf methods/results do not apply directly.

Regardless if they can efficiently find an optimum or not, 
it may be argued that in some cases knowing all the innate opinions exactly is unrealistic, and in other cases knowing the entire social network structure is difficult. 
Can the adversary cause significant disruption knowing only the network structure and nothing (or close to nothing) about the innate opinions, and vice versa? 
Can simple heuristics perform well? 
How do the answers to these questions depend on properties of the underlying social network? 
We investigate such questions in this paper. 

\subsection{Our results} \label{sec:results} 

Our first result is intuitive: no matter which set of vertices the adversary chooses, the optimal way to modify the innate opinions of these nodes is to set them to one of the two extremes: $0$ or $1$. In particular, we have the following result. 

\begin{theorem}\label{thm:convexity}
Consider the problem setup as above, with the adversary maximizing either disagreement, polarization, or a conical combination of these two (i.e., a linear combination with nonnegative coefficients). 
Assume that $G$ has no isolated vertices. 
Let $s'$ be an optimum vector of innate opinions, given the constraints. 
For every $v \in V$, if $s_{v}' \neq s_{v}$, then $s_{v}' \in \left\{ 0, 1 \right\}$.
\end{theorem}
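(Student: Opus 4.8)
The key observation is that for a fixed choice of which $k$ vertices are taken over — call this set $S$ — each objective ($D'$, $P'$, or a conical combination) is a \emph{convex} function of the free coordinates $\{s_v'\}_{v \in S}$, viewed as a point in the box $[0,1]^S$. A convex function on a polytope attains its maximum at an extreme point of the polytope, and the extreme points of the box $[0,1]^S$ are exactly the vectors whose free coordinates all lie in $\{0,1\}$. So the whole theorem reduces to proving this convexity claim, and then noting that for the non-taken-over vertices we have $s_v' = s_v$ by the constraint $\|s'-s\|_0 \le k$ together with optimality (there is no benefit to ``wasting'' a slot, but even if a slot is wasted the claim $s_v' \ne s_v \Rightarrow s_v' \in \{0,1\}$ is vacuous there). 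Thus the real content is: $s' \mapsto D(z')$ and $s' \mapsto P(z')$ are convex in $s'$, where $z' = (I+L)^{-1} s'$.

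To prove convexity, I would write everything as a quadratic form in $s'$. Since $z' = M s'$ with $M := (I+L)^{-1}$ symmetric positive definite, disagreement is $D(z') = (z')^\top L\, z' = (s')^\top M L M\, s'$, and polarization is $P(z') = (z')^\top (I - \tfrac1n \mathbf{1}\mathbf{1}^\top) z' = (s')^\top M (I - \tfrac1n \mathbf{1}\mathbf{1}^\top) M\, s'$ (using that $I - \tfrac1n\mathbf{1}\mathbf{1}^\top$ is the centering projection, so $P$ is a genuine quadratic form). Both $L$ and the centering matrix are positive semidefinite, and conjugating a PSD matrix by the symmetric matrix $M$ preserves PSD-ness: $M A M \succeq 0$ whenever $A \succeq 0$. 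Hence $D(z')$ and $P(z')$ are each PSD quadratic forms in $s'$, therefore convex, and any conical combination of convex functions is convex. Restricting a convex function on $\mathbb{R}^V$ to an affine slice (fixing the non-$S$ coordinates at $s_v$) and then to the box $[0,1]^S$ keeps it convex, so the maximum over that box is attained at a $\{0,1\}$-valued extreme point.

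The only place where the hypothesis ``$G$ has no isolated vertices'' should enter is a non-degeneracy check, so I would track where it is needed. It is presumably not needed for convexity at all — $MLM$ and $M(I-\tfrac1n\mathbf{1}\mathbf{1}^\top)M$ are PSD regardless. I suspect it is used instead to rule out a trivial pathology in the statement: if $v$ were isolated then $L$ would decouple $v$ entirely, $z_v' = s_v'$, and changing $s_v'$ to an interior value could leave $D'$ and $P'$ literally unchanged, so an optimum need not be at an extreme in that coordinate, breaking the ``$s_v' \ne s_v \Rightarrow s_v' \in \{0,1\}$'' conclusion. With no isolated vertices one argues the objective is \emph{strictly} sensitive to each taken-over coordinate — or, more carefully, that one can always push to an extreme without decreasing the objective, and strict convexity in the relevant direction forces any strict optimum to sit at an extreme. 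So the one genuine subtlety, and the step I would be most careful about, is precisely this: converting ``maximum is attained at some extreme point'' into ``\emph{every} optimum has the stated form,'' which requires either a strict-convexity argument (leveraging no isolated vertices to get positive definiteness of the quadratic form in the free directions) or a short perturbation argument showing any interior optimal coordinate can be moved to $\{0,1\}$ while keeping optimality and then invoking uniqueness/strictness.
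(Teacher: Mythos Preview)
Your proposal is correct and follows essentially the same route as the paper: write $D(z')$ and $P(z')$ as quadratic forms $(s')^{\top} M L M\, s'$ and $(s')^{\top} M\bigl(I-\tfrac{1}{n}\vec{1}\vec{1}^{\top}\bigr)M\, s'$ with $M=(I+L)^{-1}$, observe that both middle matrices are PSD so the objectives are convex in $s'$, and conclude that on the box $[0,1]^{S}$ the maximum sits at an extreme point. The paper packages this as two lemmas (one per objective) and then a two-line proof of the theorem invoking strict convexity under the no-isolated-vertices hypothesis; your treatment of polarization via the centering projection is equivalent to the paper's, which first shows $\overline{z}=\overline{s}$ and then writes $P=\widetilde{s}^{\top}M^{2}\widetilde{s}$ (these agree because $M$ commutes with $\vec{1}\vec{1}^{\top}$). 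If anything, you are more careful than the paper on the one genuinely delicate point: the paper simply asserts that with no isolated vertices the problem is strictly convex and hence every optimum has the stated form, whereas you correctly flag that ``some extreme optimum exists'' must be upgraded to ``every optimum is extreme,'' and that this is exactly where the hypothesis is used (indeed, for disagreement $MLM$ always has $\vec{1}$ in its kernel, so global strict convexity fails; what one actually needs is coordinate-wise strict convexity, i.e., $(MLM)_{vv}>0$ for each $v$, and this is precisely equivalent to $v$ not being isolated).
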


This result follows from the convexity of the objective functions, together with the fact that the adversary is maximizing the objective function. 
This implies that if the adversary has a budget of~$k$ (i.e., it can take over at most $k$ nodes), then a brute force approach can find an optimal solution by checking all $\binom{n}{k} 2^{k}$ possibilities, where $n$ denotes the number of nodes. For constant $k$ this gives a polynomial-time algorithm, but it performs poorly as $k$ grows. 
In fact, we conjecture that solving the optimization problem of the adversary is computationally hard when $k$ is large (e.g., $k = n^{\eps}$). 

Next, we examine quantitatively the effect that the adversary can have on disagreement and polarization, both theoretically and empirically. 
First, we prove that the adversary can only increase disruption linearly in $k$. 
Specifically, for the polarization objective we show that the increase is always bounded above by $3k$; this is the content of the following theorem.

\begin{theorem}[Upper bound on the increase in polarization]\label{thm:P_bound}
Let $G$ be a weighted graph and $s$ a vector of innate opinions such that the resulting equilibrium opinion vector $z$ has polarization~$P$. 
Suppose that the adversary has a budget of $k$; 
that is, the adversary may select $s' \in \left[ 0, 1 \right]^{V}$ such that $\left\| s' - s \right\|_{0} \leq k$. 
Let $P'$ be the polarization of the resulting equilibrium opinion vector $z' = \left( I + L \right)^{-1} s'$. 
Then 
\[
P' \leq P + 3k.
\]
\end{theorem}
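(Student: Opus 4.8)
The plan is to write polarization as a quadratic form and to control how that form changes under a $k$-sparse perturbation of the innate opinions. Write $n = |V|$, let $\mathbf{1}$ denote the all-ones vector, let $M := I - \tfrac{1}{n}\mathbf{1}\mathbf{1}^{\top}$ be the centering projection (symmetric, idempotent, with $M\mathbf{1} = 0$), and set $K := (I+L)^{-1}$, so that $z = Ks$, $z' = Ks'$, $P = \|Mz\|^{2}$, and $P' = \|Mz'\|^{2}$. The first step is to record the relevant properties of $K$: (i) $K$ is symmetric; (ii) $I+L$ is a strictly diagonally dominant $M$-matrix with $(I+L)\mathbf{1} = \mathbf{1}$, hence $K$ is entrywise nonnegative with every row sum equal to $1$; in particular $K$ is doubly stochastic and maps $[0,1]^{V}$ into $[0,1]^{V}$ (this is the standard fact that Friedkin-Johnsen equilibria stay in $[0,1]$); and (iii) the eigenvalues of $K$ lie in $(0,1]$, so $K$ has operator norm at most $1$. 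Because $K$ is symmetric and $K\mathbf{1} = \mathbf{1}$, the matrices $M$ and $K$ commute, and $A := MK = KM = MKM = K - \tfrac{1}{n}\mathbf{1}\mathbf{1}^{\top}$ is symmetric positive semidefinite with operator norm at most $1$; consequently $Mz = As$ and $Mz' = As'$.

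Next, put $\delta := s' - s$. This vector has at most $k$ nonzero coordinates, each of absolute value at most $1$, so $\|\delta\|_{1} \le k$ and $\|\delta\|_{2}^{2} \le k$. Expanding the polarization of the perturbed equilibrium,
\[
P' = \|A(s+\delta)\|^{2} = \|As\|^{2} + 2\langle A^{2}s,\,\delta\rangle + \|A\delta\|^{2} = P + 2\langle A^{2}s,\,\delta\rangle + \|A\delta\|^{2}.
\]
The last term is at most $\|\delta\|_{2}^{2} \le k$ since $A$ has operator norm at most $1$. For the cross term, observe that $A^{2}s = A(As) = A(Mz) = MKz =: Mu$, where $u := Kz$; by property (ii), $u \in [0,1]^{V}$, so every coordinate of $A^{2}s = u - \overline{u}\,\mathbf{1}$ lies in $[-1,1]$, i.e.\ $\|A^{2}s\|_{\infty} \le 1$. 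Therefore, by the $\ell_{\infty}$--$\ell_{1}$ inequality, $|\langle A^{2}s,\delta\rangle| \le \|A^{2}s\|_{\infty}\,\|\delta\|_{1} \le k$. Combining the three estimates gives $P' \le P + 2k + k = P + 3k$, as claimed.

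The only subtle point is the cross term $\langle A^{2}s,\delta\rangle$, and it is instructive to see why a cruder argument is insufficient: the triangle inequality alone yields $\sqrt{P'} = \|Mz'\| \le \|Mz\| + \|A\delta\| \le \sqrt{P} + \sqrt{k}$, hence $P' \le P + 2\sqrt{Pk} + k$, which is weaker than $P + 3k$ precisely when $P$ is large relative to $k$ (and $P$ can be of order $n$). The gain comes from not discarding the sparsity of $\delta$: estimating $\langle A^{2}s,\delta\rangle$ through the $\ell_{\infty}$--$\ell_{1}$ pairing, together with the $[0,1]^{V}\to[0,1]^{V}$ invariance of $K$, which bounds $\|A^{2}s\|_{\infty}$ by $1$, replaces the factor $\sqrt{Pk}$ with $k$. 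The rest is routine: verifying the $M$-matrix and double-stochasticity properties of $K$ (standard) and carrying out the short expansion above.
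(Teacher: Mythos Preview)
Your proof is correct, and the overall skeleton---expand $P'$ as $P$ plus a cross term plus a quadratic perturbation term, then bound each piece by $k$---is the same as the paper's. The execution differs in where the work is done. The paper operates in $z$-space and isolates a preliminary lemma $\|z'-z\|_{1}\le k$ (proved from the nonnegativity and unit column sums of $(I+L)^{-1}$); both the quadratic term $\sum_i(z_i'-z_i)^2$ and the cross term $\sum_i(z_i'-z_i)(z_i-\overline z)$ are then controlled by this $\ell_1$ bound together with the coordinate estimates $|z_i'-z_i|\le 1$ and $|z_i-\overline z|\le 1$. You stay in $s$-space instead: your quadratic term $\|A\delta\|^{2}\le\|\delta\|_{2}^{2}\le k$ uses only the spectral bound $\|A\|\le 1$, while your cross term $\langle A^{2}s,\delta\rangle$ is handled by the trivial $\|\delta\|_{1}\le k$ together with $\|A^{2}s\|_{\infty}=\|M(Kz)\|_{\infty}\le 1$, which relies on the $[0,1]^{V}$-invariance of $K$. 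The paper's route has the advantage that its $\ell_{1}$ lemma on $z'-z$ is reused verbatim for the disagreement bound (Theorem~\ref{thm:D_bound}); your route is a bit slicker for polarization alone, since the $\ell_{1}$ control on $\delta$ is immediate and the operator-norm bound dispatches the quadratic term without a separate lemma.
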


For the disagreement objective, our result gives a bound of $8 d_{\max} k$, where $d_{\max}$ is the (weighted) maximum degree. Thus for bounded-degree graphs this is still $O(k)$. 

\begin{theorem}[Upper bound on the increase in disagreement]\label{thm:D_bound}
Let $G$ be a weighted graph and $s$ a vector of innate opinions such that the resulting equilibrium opinion vector $z$ has disagreement~$D$. 
Suppose that the adversary has a budget of $k$; 
that is, the adversary may select $s' \in \left[ 0, 1 \right]^{V}$ such that $\left\| s' - s \right\|_{0} \leq k$. 
Let $D'$ be the disagreement of the resulting equilibrium opinion vector $z' = \left( I + L \right)^{-1} s'$. 
Then 
\[
D' \leq D + 8 d_{\max} k,
\]
where $d_{\max} := \max_{v \in V} \sum_{u \in V} w_{v,u}$ is the (weighted) maximum degree. 
\end{theorem}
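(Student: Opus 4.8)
The plan is to express the increase in disagreement as a change in a quadratic form in the innate opinions, record a handful of elementary properties of the resolvent $M := (I+L)^{-1}$, and then switch on the hacked coordinates one at a time, controlling each switch by $O(d_{\max})$. Since disagreement is the Laplacian quadratic form, $D(z)=z^\top L z$, and $z=Ms$, $z'=Ms'$, we may write $D=g(s)$ and $D'=g(s')$ where $g(x):=x^\top A x$ with $A:=MLM$ symmetric and positive semidefinite; it thus suffices to bound $g(s')-g(s)$ when $s,s'\in[0,1]^V$ differ in at most $k$ coordinates.

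The first real step is to observe that $M$ is symmetric and doubly stochastic with nonnegative entries. Symmetry is clear; $M\mathbf{1}=\mathbf{1}$ follows from $L\mathbf{1}=0$ (so $(I+L)\mathbf{1}=\mathbf{1}$); and entrywise nonnegativity follows from the Neumann expansion $M=\sum_{t\ge 0}\big((I+\diag(d))^{-1}W\big)^t (I+\diag(d))^{-1}$, writing $L=\diag(d)-W$ with $W\ge 0$ entrywise, which converges because $(I+\diag(d))^{-1}W$ has all row sums at most $\max_v d_v/(1+d_v)<1$. Three consequences will be used repeatedly: (i) $M$ maps $[0,1]^V$ into $[0,1]^V$ (convex combinations); (ii) $\|Mu\|_\infty\le\|u\|_\infty$ for every vector $u$; and (iii) $\|Lv\|_\infty\le d_{\max}$ for every $v\in[0,1]^V$, since $(Lv)_i=\sum_j w_{i,j}(v_i-v_j)$ with $|v_i-v_j|\le 1$ and $\sum_j w_{i,j}=d_i\le d_{\max}$. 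Chaining these, for any $x\in[0,1]^V$ and any index $j$ we get $|(Ax)_j|\le\|ML(Mx)\|_\infty\le\|L(Mx)\|_\infty\le d_{\max}$ (using $Mx\in[0,1]^V$), and similarly $|A_{jj}|=|(MLMe_j)_j|\le d_{\max}$ (using that the $j$-th column $Me_j$ of $M$ lies in $[0,1]^V$).

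Now carry out the coordinate-by-coordinate argument. Let $\{v_1,\dots,v_m\}=\{v:s'_v\neq s_v\}$ with $m\le k$, set $x^{(0)}=s$, and let $x^{(i)}=x^{(i-1)}+t_i e_{v_i}$ with $t_i=s'_{v_i}-s_{v_i}\in[-1,1]$, so that $x^{(m)}=s'$ and each $x^{(i)}\in[0,1]^V$. Expanding the quadratic form,
\[
g\big(x^{(i)}\big)-g\big(x^{(i-1)}\big)=2t_i\big(Ax^{(i-1)}\big)_{v_i}+t_i^2 A_{v_iv_i}\le 2d_{\max}+d_{\max}=3d_{\max},
\]
by the bounds just derived together with $|t_i|\le 1$; summing over $i$ gives $D'-D=g(s')-g(s)\le 3d_{\max}m\le 3d_{\max}k\le 8d_{\max}k$. (One can instead argue globally: write $s'=s+\delta$ with $\|\delta\|_1\le\|\delta\|_0\le k$ and $\|\delta\|_\infty\le 1$, expand $g(s')-g(s)=2(As)^\top\delta+\delta^\top A\delta$, and estimate each term via $\ell_1$--$\ell_\infty$ duality using (ii)--(iii); that route lands exactly at $8d_{\max}k$.) The only thing requiring care is the structural step: verifying that $(I+L)^{-1}$ is doubly stochastic and entrywise nonnegative — equivalently, that the Friedkin--Johnsen map preserves $[0,1]^V$ and fixes constants — and then keeping track of which vectors ($Mx$, the columns of $M$, the interpolants $x^{(i)}$) lie in $[0,1]^V$ so that (iii) may be applied. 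It is worth noting that the factor $d_{\max}$ cannot be removed and that the naive spectral route — Cauchy--Schwarz in the $L$-seminorm, $z^\top L(z'-z)\le\sqrt D\,\sqrt{(z'-z)^\top L(z'-z)}$ — is far worse, since $D$ itself can be of order $n\,d_{\max}$; the combinatorial $\ell_\infty$ estimate above is what makes the bound depend only on $k$.
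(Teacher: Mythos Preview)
Your proof is correct, and in fact yields the sharper bound $D'\le D+3d_{\max}k$. The paper takes a different route: it works at the level of the equilibrium opinions $z,z'$ rather than the innate opinions $s,s'$, expanding
\[
(z_i'-z_j')^2=\bigl((z_i'-z_i)+(z_i-z_j)+(z_j-z_j')\bigr)^2
\]
and bounding the resulting cross terms using the preliminary estimate $\|z'-z\|_1\le k$ (their Lemma~\ref{lem:z_diff_L1}), which is precisely the column-stochasticity of $M=(I+L)^{-1}$ that you also invoke. Summing the six pieces yields the constant $8$. Your argument instead pulls everything back to the quadratic form $g(s)=s^\top MLM\,s$, establishes the uniform $\ell_\infty$ bound $\|MLMx\|_\infty\le d_{\max}$ for $x\in[0,1]^V$, and telescopes over the $k$ hacked coordinates one at a time. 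Both proofs rest on the same structural fact---$M$ is entrywise nonnegative and doubly stochastic---but your coordinate-by-coordinate expansion avoids the multiplicity of cross terms and hence the looser constant, while the paper's expand-and-bound argument is more hands-on and mirrors its proof of the polarization bound (Theorem~\ref{thm:P_bound}) line for line.
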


These results lead to a natural question: can the adversary achieve an increase in these objective functions that grows linearly with $k$? 
We show empirically, on both synthetic and real data sets, that this is indeed the case. 

We first consider a greedy algorithm, where the adversary iteratively selects nodes to take over, in each iteration choosing the node, together with one of the two extreme opinions, that maximizes the objective function. 
While this greedy algorithm is natural, it also uses detailed information: specifically, it assumes knowledge of the network $G$ and the innate opinions $s$. 
Since this may be unrealistic in practice, we also consider simpler heuristics for the adversary. 

One such heuristic, which we term the ``mean opinion'' heuristic, is to choose the node whose (innate) opinion is closest to the mean and change it to one of the two extremes (either by optimizing this choice or just randomly). 
Such a heuristic can easily be implemented approximately by an adversary, since often it is possible to deduce whether someone has a centrist opinion by using extra information available about the node. 
Another heuristic that we consider focuses on a simple function of the underlying graph structure: 
iteratively choosing the largest degree nodes (in either a weighted or unweighted sense) and changing their opinion to one of the two extremes. 
We also compare all the algorithms to a random baseline, where the adversary selects nodes randomly and changes their opinions to random extremes.

We evaluate these algorithms on both synthetic and real data sets. For synthetic networks we use three common probabilistic generative models: Erd\H{o}s-R\'enyi random graphs, the preferential attachment model, and the stochastic block model. 
We also study Reddit and Twitter data sets that were collected in~\cite{de2014} and subsequently studied in~\cite{musco2018minimizing}.

Our main empirical finding is that in almost all settings---meaning, a network (synthetic or real, as above), an algorithm (from the ones described above), and an objective function (disagreement, polarization, or a conical combination)---the adversary succeeds in increasing its objective function linearly in $k$.  
The rate of increase depends on the details: the greedy algorithm performs best among these options, but the mean opinion heuristic is often not far behind. Even the random baseline gives a linear increase in $k$ in several (though not all) settings. The details of all our experiments are in Section~\ref{sec:algorithms}. Our code and data are publicly available at 
\url{https://github.com/mayee107/network-disruption}.

The key take-away from our results is that an adversary can significantly disrupt a network using simple, unsophisticated methods. 
This mirrors recent findings analyzing real-world data; for instance, the authors in~\cite{boyd2018characterizing} conclude that the Internet Research Agency's operations to interfere with the 2016 U.S.~presidential election ``were largely unsophisticated''. 
This adversarial approach thus highlights the seriousness of the problem and motivates further research into addressing it.

\subsection{Related work} \label{sec:related} 

Opinion dynamics have been used in various disciplines to model social learning (see, e.g., the survey~\cite{mossel2017opinion}). 
In seminal work, the DeGroot model describes how individuals reach a consensus through stochastic interactions~\cite{degroot1974reaching}.  
Friedkin and Johnsen extended this model to incorporate individuals' intrinsic beliefs and prejudices~\cite{friedkin1990social}. 
In the Friedkin-Johnsen model, all agents have individual innate opinion values, 
and as time goes on, agents interact with each other, updating their opinions to be a weighted average of their innate opinion and the neighboring agents' opinions. Eventually, opinions converge to an equilibrium, which is a non-constant function of the innate opinions. 
This latter property is an important reason why we use the Friedkin-Johnsen model for opinion dynamics in this paper, in addition to its simplicity. 
The Friedkin-Johnsen model can be extended in a variety of ways, for instance to incorporate stubbornness and susceptibility to persuasion~\cite{abebe2018opinion}.

Several recent works have studied various network interventions to influence opinions in certain ways. 
Gionis, Terzi, and Tsaparas~\cite{gionis2013opinion} studied opinion maximization in social networks, which corresponds to pushing a specific agenda. 
Abebe~et~al.~\cite{abebe2018opinion} study a similar problem (opinion maximization or minimization), but where interventions happen at the level of susceptibility to persuasion. 

In contrast, the work of Musco, Musco, and Tsourakakis~\cite{musco2018minimizing}---which serves as the starting point of our work---studies polarization and disagreement, which are quite different objectives. The goal of their work is to \emph{minimize} these quantities by changing the underlying network topology.

Our key conceptual contribution is to study the \emph{opposite} objective: \emph{maximizing} polarization and disagreement. This corresponds to an adversarial perspective, which is motivated by recent developments over the past few years: malicious actors have increasingly been working towards disrupting networks by increasing disagreement and polarization, thereby destabilizing society~\cite{facebook, guardian, USHPSCI, boyd2018characterizing}. Also, the specific intervention we consider is taking over nodes of a network and modifying their (innate) opinions.

\emph{Note added.} While finishing this paper, the paper~\cite{gaitonde2020} was posted to arXiv, with similar ideas. However, their focus is on the special case when society initially has a consensus (i.e., $s=0$), and this is perturbed by an adversary that can modify the entire innate opinion vector. They formalize the constraint on the adversary as an $L_{2}$-norm bound, whereas we use the constraint $\left\| s' - s \right\|_{0} \leq k$, which has a clear interpretation in the adversary taking over at most $k$ nodes of the network. 

\subsection{Outline} \label{sec:outline} 

The rest of the paper is organized as follows. In Section~\ref{sec:problem} we detail the problem setup for clarity. In Section~\ref{sec:convexity} we prove Theorem~\ref{thm:convexity}, and then in Section~\ref{sec:bounds} we prove Theorems~\ref{thm:P_bound} and~\ref{thm:D_bound}. 
In Section~\ref{sec:algorithms} we empirically analyze several efficient algorithms and heuristics for the adversary, both on synthetic networks generated according to common probabilistic models and on Twitter and Reddit data sets. 
We conclude in Section~\ref{sec:discussion} with discussion and questions for future work. 

\section{Problem setup} \label{sec:problem} 

In this section we detail the problem setup for clarity. 
We fix a weighted graph $G=(V,E,w)$ which represents the social network. 
Let $n=|V|$ denote the number of vertices (we often write $[n]$ for the vertex set) and let $m = |E|$ denote the number of edges. 
For convenience we define the weight function on all pairs of nodes, with $0 < w_{i,j} \le 1$ if $(i,j) \in E$ and $w_{i,j} = 0$ otherwise. 
We also set $w_{i,i} = 0$ for all $i \in V$. 

Let $d_{i} = \sum_{j \in V} w_{i,j}$ denote the (weighted) degree of node $i$ and let $D$ be the diagonal matrix with entries $d_{1}, \ldots, d_{n}$ on the diagonal. Let $A$ denote the (weighted) adjacency matrix of $G$, with $A_{i,j} := w_{i,j}$ for $i, j \in V$. 
Let $L = D - A$ denote the weighted combinatorial Laplacian of $G$, which  we refer to just as the Laplacian of $G$. 
Finally, let $\vec{1}$ denote the all-ones vector. 

\textbf{Opinion dynamics.} 
Let $s = \left( s_{1}, \ldots, s_{n} \right) \in \left[ 0, 1 \right]^{n}$ denote the vector of innate opinions. 
In the Friedkin-Johnsen model of opinion dynamics~\cite{friedkin1990social}, agents interact with each other as time goes on, updating their opinions to be a weighted average of their innate opinion and the neighboring agents' opinions. 
Formally, if $z_{i}^{(t)}$ denotes the opinion of node $i$ at time $t$ (where $t \in \left\{ 0, 1, 2, \ldots \right\}$), then initially $z_{i}^{(0)} = s_{i}$ and the update for $t \geq 0$ is given by 
\[
z_{i}^{(t+1)} 
= 
\frac{s_{i}+\sum_{j \in V} w_{i,j} z_{j}^{(t)}}{1+\sum_{j \in V} w_{i,j}}.
\]
As $t \to \infty$, the vector of opinions converges to an equilibrium vector $z$ that satisfies 
\begin{equation}\label{eq:z}
z = \left( I + L \right)^{-1} s,
\end{equation}
where $I$ is the $n \times n$ identity matrix. 

\textbf{Disagreement and polarization.} 
Following~\cite{musco2018minimizing}, we study 
the disagreement $D(z)$ and the polarization $P(z)$ of a vector of opinions $z$; see~\eqref{eq:D} and~\eqref{eq:P} for the definitions. 
Note that since the equilibrium opinion vector $z$ is a function of the innate opinion vector $s$, disagreement $D$ and polarization $P$ can be considered functions of $s$ as well, in which case we will denote them by $D(s)$ and $P(s)$, respectively. 
When clear from the context, we may denote these by just $D$ and $P$. 
We also study linear combinations of these two quantities.

\textbf{The objectives of the adversary.} 
We are now ready to mathematically formulate our original questions as three optimization problems with varying objective functions. 
For any weighted graph $G$, innate opinions $s$, and budget $k \in \mathbb{N}$, the adversary aims to determine the optimal modified innate opinion vector $s'$ according to the following. 

\begin{itemize}
\item \textbf{Problem 1: Disagreement}
\begin{equation}
\begin{aligned}
& \text{maximize}
& & D\left(z'\right) \\
& \text{subject to}
& & z' = (I + L)^{-1} s', \\
& & & s' \in [0, 1]^n, \\
& & & ||s' - s||_0 \le k.
\end{aligned} \label{opt:D}
\end{equation}

\item \textbf{Problem 2: Polarization}
\begin{equation}
\begin{aligned}
& \text{maximize}
& & P\left(z'\right) \\
& \text{subject to}
& & z' = (I + L)^{-1} s', \\
& & & s' \in [0, 1]^n, \\
& & & ||s' - s||_0 \le k.
\end{aligned} \label{opt:P}
\end{equation}

\item \textbf{Problem 3: Weighted Sum}
\begin{equation}
\begin{aligned}
& \text{maximize}
& & P\left(z'\right) + \lambda \frac{n}{m} D \left( z' \right) \\
& \text{subject to}
& & z' = (I + L)^{-1} s', \\
& & & s' \in [0, 1]^n, \\
& & & ||s' - s||_0 \le k.
\end{aligned} \label{opt:Sum}
\end{equation}
\end{itemize}

Note that in \eqref{opt:Sum}, we introduce $\lambda$ as a parameter to describe the relative importance of disagreement versus polarization to the adversary. For this weighted sum index, we have scaled disagreement by $\frac{|V|}{|E|} = \frac{n}{m}$ so that polarization and disagreement are considered with equal weight and have the same order of growth in our analysis when $\lambda = 1$.

\section{Convexity and choosing extreme opinions} \label{sec:convexity} 

For all three optimization problems, the set of constraints do not form a convex set due to the constraint $||s' - s||_0 \le k$. However, we prove that all of the objective functions are convex in $s'$, which implies that $s_i' \in \{0, 1\}$ for all vertices $i$ where $s_i' \neq s_i$.

\begin{lemma}\label{lem:D_convexity}
Disagreement is convex in $s'$. That is, the function $s' \mapsto D \left( s' \right)$ is convex. 
\end{lemma}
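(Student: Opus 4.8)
The plan is to write disagreement as an explicit quadratic form in $s'$ and then observe that the associated matrix is positive semidefinite. First I would recall that for any opinion vector $z$ we have the standard identity
\[
D(z) = \sum_{(u,v) \in E} w_{u,v}\left(z_{u} - z_{v}\right)^{2} = z^{\top} L z,
\]
which is just the Laplacian quadratic form (this is where the definition $L = D - A$ and the symmetry $w_{u,v} = w_{v,u}$ get used). Substituting the equilibrium relation $z' = (I+L)^{-1} s'$ from~\eqref{eq:z}, this gives
\[
D(s') = (s')^{\top} (I+L)^{-1} L (I+L)^{-1} s' =: (s')^{\top} M s',
\qquad M := (I+L)^{-1} L (I+L)^{-1}.
\]

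The next step is to check that $M$ is symmetric and positive semidefinite. Since $L$ is symmetric, so is $I+L$, and hence so is $(I+L)^{-1}$; moreover $I+L$ is positive definite (its eigenvalues are $1 + \mu$ where $\mu \geq 0$ ranges over the eigenvalues of the PSD matrix $L$), so $(I+L)^{-1}$ is well defined and symmetric. Therefore $M = B^{\top} L B$ with $B = (I+L)^{-1}$ symmetric, and for any vector $x$ we have $x^{\top} M x = (Bx)^{\top} L (Bx) \geq 0$ because $L \succeq 0$. Thus $M \succeq 0$.

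Finally, any function of the form $x \mapsto x^{\top} M x$ with $M \succeq 0$ is convex (its Hessian is $2M \succeq 0$), so $s' \mapsto D(s')$ is convex, which is the claim. I do not anticipate a genuine obstacle here: the only points needing a line of care are verifying the Laplacian identity $D(z) = z^{\top} L z$ and noting that $I+L$ is invertible with symmetric inverse so that the congruence argument $M = B L B \succeq 0$ goes through.
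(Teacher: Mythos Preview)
Your proposal is correct and follows essentially the same approach as the paper: both write $D(s') = (s')^{\top}(I+L)^{-1}L(I+L)^{-1}s'$ and then argue that the middle matrix is positive semidefinite. The only cosmetic difference is that the paper factors $L = B^{\top}B$ to exhibit the matrix as $C^{\top}C$, whereas you use the congruence argument $x^{\top}Mx = (Bx)^{\top}L(Bx)\ge 0$ directly; either way the conclusion is immediate.
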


\begin{proof}
Disagreement can be written in quadratic form as $z'^T L z'$. Noting that $I + L $ is symmetric and using~\eqref{eq:z}, $D$ can be expressed as
\[
D\left( s' \right) = z'^T L z' = \left(\left(I + L\right)^{-1} s'\right)^T L \left(\left(I + L\right)^{-1} s'\right) = s'^T \left(I + L \right)^{-1} L \left(I + L \right)^{-1} s'.
\]
The Laplacian matrix $L$ is positive semidefinite and symmetric, so $L$ can be written as $L = B^T B$ for some matrix $B \in \mathbb{R}^{n \times n}$. Therefore, $(I + L)^{-1} L (I + L)^{-1} = (I + L)^{-1} B^T B (I + L)^{-1} = (B(I + L)^{-1})^T (B(I + L)^{-1})$, so $(I + L)^{-1} L (I + L)^{-1}$ is also positive semidefinite. Thus we can write $D$ as a quadratic form in terms of $s'$, with a positive semidefinite matrix, so $D\left(s'\right)$ is convex in $s'$. 
\end{proof}

\begin{lemma}\label{lem:P_convexity}
Polarization is convex in $s'$. That is, the function $s' \mapsto P \left( s' \right)$ is convex. 
\end{lemma}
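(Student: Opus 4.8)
The plan is to mirror the proof of Lemma~\ref{lem:D_convexity}: write polarization as a quadratic form in $z'$ with a positive semidefinite matrix, substitute $z' = (I+L)^{-1}s'$, and use the symmetry of $(I+L)^{-1}$ to obtain a quadratic form in $s'$ whose matrix is again positive semidefinite.

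First I would put $P$ in matrix form. Expanding $P(z') = \sum_{v \in V}(z'_v - \overline{z'})^2$ and using $\overline{z'} = \tfrac1n \vec{1}^T z'$ gives $P(z') = z'^T z' - \tfrac1n (\vec{1}^T z')^2 = z'^T M z'$, where $M := I - \tfrac1n \vec{1}\vec{1}^T$ is the centering matrix. This matrix is symmetric, and it is positive semidefinite since $x^T M x = \sum_{v \in V}(x_v - \overline{x})^2 \geq 0$ for every $x \in \R^n$ (equivalently, $M$ is an orthogonal projection, as $M^2 = M = M^T$).

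Next I would substitute the equilibrium relation~\eqref{eq:z}. Since $I+L$ is symmetric, $(I+L)^{-1}$ is symmetric as well, so
\[
P(s') = z'^T M z' = s'^T (I+L)^{-1} M (I+L)^{-1} s'.
\]
It then remains to check that $(I+L)^{-1} M (I+L)^{-1}$ is positive semidefinite. Because $M$ is symmetric positive semidefinite, it factors as $M = C^T C$ for some $C \in \R^{n \times n}$ (for instance $C = M$, since $M = M^T M$), and hence $(I+L)^{-1} M (I+L)^{-1} = \big(C(I+L)^{-1}\big)^T \big(C(I+L)^{-1}\big)$, which is of the form $N^T N$ and therefore positive semidefinite. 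Thus $P$ is a quadratic form in $s'$ with a positive semidefinite matrix, so $s' \mapsto P(s')$ is convex.

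There is no real obstacle here beyond the bookkeeping already carried out for disagreement; the one step that needs a line of care is the identity $P(z') = z'^T M z'$, i.e.\ correctly expanding $\sum_{v}(z'_v - \overline{z'})^2$ and identifying the centering matrix $M$, together with the observation that $(I+L)^{-1}$ inherits symmetry from $I+L$ so that the conjugated matrix remains symmetric and the $N^T N$ factorization applies.
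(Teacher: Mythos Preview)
Your proof is correct, and the core idea---writing $P$ as a positive semidefinite quadratic form in $s'$---is the same as the paper's. The only difference is in packaging: the paper first establishes $\overline{z}=\overline{s}$ (using $(I+L)^{-1}\vec{1}=\vec{1}$), deduces $\wt{z}=(I+L)^{-1}\wt{s}$, writes $P=\wt{s}^{\,T}\big((I+L)^{-1}\big)^{2}\wt{s}$, and then argues convexity as a composition of a convex quadratic with the affine centering map $s\mapsto\wt{s}$. Your route via the centering matrix $M=I-\tfrac1n\vec{1}\vec{1}^T$ is slightly more streamlined, since it yields the single PSD quadratic form $s'^{\,T}(I+L)^{-1}M(I+L)^{-1}s'$ directly without needing the intermediate fact $\overline{z}=\overline{s}$.
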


\begin{proof}
For notational convenience we drop all apostrophes from the notation. 
For a vector $x \in \R^{n}$ let $\wt{x} := x - \overline{x} \vec{1}$ denote the centered vector.  
With this notation we have $P(z) = \wt{z}^{T} \wt{z}$.

Observe that $L \vec{1} = 0$, and so $(I+L)\vec{1} = \vec{1}$ and $(I+L)^{-1} \vec{1} = \vec{1}$. 
Using~\eqref{eq:z} this implies that 
$\overline{z} = \tfrac{1}{n} z^{T} \vec{1} 
= \tfrac{1}{n} z^{T} (I+L)^{-1} \vec{1} 
= \tfrac{1}{n} s^{T} \vec{1} = \overline{s}$. 
In words, the mean equilibrium opinion is the same as the mean innate opinion. 
This, in turn, implies that $\wt{z} = (I+L)^{-1} \wt{s}$. 
With this notation we have that 
\[
P(z) = \wt{z}^{T} \wt{z} 
= \wt{s}^{T} \left( \left( I + L \right)^{-1} \right)^{2} \wt{s}.
\]
For a vector $x \in \R^{n}$ define $f(x) := x^{T} \left( \left( I + L \right)^{-1} \right)^{2} x$ and $g(x) := x - \overline{x} \vec{1} = \wt{x}$. 
Note that $\left( \left( I + L \right)^{-1} \right)^{2}$ is positive semidefinite, since it is the square of $\left( I + L \right)^{-1}$, which is positive semidefinite and symmetric. 
This implies that $f$ is convex, since it is a quadratic form with a positive semidefinite matrix. 
Note also that for any two vectors $x, y \in \R^{n}$ and $\alpha \in [0,1]$ we have that 
$g(\alpha x + (1-\alpha) y) = \alpha g(x) + (1-\alpha) g(y)$. 
Therefore the convexity of $P = f \circ g$ follows directly from the convexity of $f$.
\end{proof}

An immediate consequence of Lemmas~\ref{lem:D_convexity} and~\ref{lem:P_convexity} is that any conical combination of disagreement and polarization is convex in $s'$. 
This is because convexity is preserved by scaling with a positive constant, as well as across addition. 

\begin{proof}[Proof of Theorem~\ref{thm:convexity}]
Lemmas~\ref{lem:D_convexity} and~\ref{lem:P_convexity} show that the adversary's optimization problem is a convex maximization problem in $s'$. 
Moreover, if $G$ has no isolated vertices then this is a strictly convex maximization problem. 
Therefore any coordinate of $s$ that is changed in $s'$ must be changed to an extreme: $0$ or $1$. 
\end{proof}

We conclude this section by a simple example that shows that the objective functions we are considering are not submodular. 

\begin{example}[A single edge]
Consider a graph with two nodes, denoted $1$ and $2$, with an edge between them with weight $w_{1,2} = 1$. Suppose that the innate opinions are initially centrist: $s_{1} = s_{2} = 1/2$. In this case the equilibrium opinions are also centrist: $z_{1} = z_{2} = 1/2$, leading to no disagreement or polarization: $D(z) = P(z) = 0$. 

If an adversary has a budget of $k = 1$, they will change the innate opinion of a(n arbitrary) node to an (arbitrary) extreme: $s_{1}' = 0$, $s_{2}' = 1/2$. This results in the equilibrium opinions 
$z_{1}' = 1/6$ and $z_{2}' = 1/3$, 
giving disagreement 
$D\left( z' \right) = 1/36$ 
and polarization 
$P\left( z' \right) = 1/72$.

If an adversary has a budget of $k=2$, 
they will change the innate opinions to opposite extremes: 
$s_{1}'' = 0$, $s_{2}'' = 1$. 
This results in the equilibrium opinions 
$z_{1}'' = 1/3$ and $z_{2}'' = 2/3$, 
giving disagreement 
$D \left( z'' \right) = 1/9$ 
and polarization 
$P \left( z'' \right) = 1/18$.

For both disagreement and polarization the increase in the second step is greater than the increase in the first step, and hence these objective functions are not submodular. 
\end{example}

\section{Bounds on network disruption} \label{sec:bounds} 

In this section we prove Theorems~\ref{thm:P_bound} and~\ref{thm:D_bound}.  
We start with a preliminary lemma which gives a bound on the $L_{1}$-norm of the difference between the modified equilibrium opinion vector $z'$ and the original equilibrium opinion vector $z$. 

\begin{lemma}\label{lem:z_diff_L1}
Let $s$ be the original innate opinion vector and 
let $s'$ be the modified innate opinion vector, 
satisfying $\left\| s' - s \right\|_{0} \leq k$. 
Let $z$ and $z'$ be the respective equilibrium opinion vectors. 
Then 
\[
\left\| z' - z \right\|_{1} \leq k.
\]
\end{lemma}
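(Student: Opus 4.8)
The plan is to write $z' - z = (I+L)^{-1}(s' - s)$ and bound its $L_1$-norm by controlling the $L_1 \to L_1$ operator norm of $M := (I+L)^{-1}$. Since $s' - s$ is supported on at most $k$ coordinates and each coordinate lies in $[-1,1]$, we have $\|s' - s\|_1 \le k$. If I can show that $M$ is a matrix whose columns each have $L_1$-norm at most $1$ — equivalently, that $\|Mx\|_1 \le \|x\|_1$ for all $x$ — then $\|z' - z\|_1 = \|M(s'-s)\|_1 \le \|s' - s\|_1 \le k$, which is exactly the claim.

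So the crux is to prove that every column of $(I+L)^{-1}$ is a nonnegative vector summing to at most $1$ (nonnegativity plus column sums $\le 1$ gives the $L_1$-contraction property, since $\|Mx\|_1 \le \sum_j |x_j| \|\text{col}_j(M)\|_1$). The key structural fact is that $I + L = I + D_{\mathrm{deg}} - A$ is a (nonsingular, weakly) diagonally dominant M-matrix: its diagonal entries are $1 + d_i > 0$, its off-diagonal entries are $-w_{i,j} \le 0$, and the $i$-th row sum is $1 + d_i - \sum_j w_{i,j} = 1 > 0$. A standard fact about M-matrices of this form is that $(I+L)^{-1}$ is entrywise nonnegative. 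For the column-sum bound I would argue via the row sums: since $(I+L)\vec{1} = \vec{1} + L\vec{1} = \vec{1}$ (using $L\vec 1 = 0$), we get $(I+L)^{-1}\vec{1} = \vec{1}$, i.e. the \emph{row} sums of $M$ are all exactly $1$. Combined with nonnegativity this says $\|M\|_\infty = 1$. To get the \emph{column} sums I'd instead use the probabilistic / iterative interpretation: $(I+L)^{-1} = \sum_{t \ge 0} (I - (I+L)^{-1} \cdot \text{something})$... more cleanly, write $(I+L)^{-1}$ in terms of the Friedkin–Johnsen iteration. Each update $z^{(t+1)}_i = \frac{s_i + \sum_j w_{i,j} z^{(t)}_j}{1 + d_i}$ expresses $z^{(t+1)}$ as $z^{(t+1)} = (I+D_{\mathrm{deg}})^{-1} s + (I+D_{\mathrm{deg}})^{-1} A z^{(t)}$, and iterating shows $z = \sum_{t\ge 0} \big((I+D_{\mathrm{deg}})^{-1}A\big)^t (I+D_{\mathrm{deg}})^{-1} s$. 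The matrix $Q := (I+D_{\mathrm{deg}})^{-1}A$ is substochastic (nonnegative, $i$-th row sum $d_i/(1+d_i) < 1$), so its powers are too, and each column of $(I+D_{\mathrm{deg}})^{-1}$ sums to $1/(1+d_i) \le 1$; but column sums don't compose as nicely as row sums.

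The cleanest route, and the one I would actually carry out: prove the \emph{transpose} statement. We want column sums of $M = (I+L)^{-1}$, i.e. row sums of $M^T = \big((I+L)^T\big)^{-1} = (I + L^T)^{-1}$. Now $L^T = L$ since the Laplacian of an (undirected) weighted graph is symmetric! So $M^T = M$, and the column sums equal the row sums, which we already showed are all exactly $1$ via $(I+L)\vec 1 = \vec 1$. Therefore $M$ is symmetric, entrywise nonnegative (M-matrix inverse), with all row and column sums equal to $1$ — a doubly (sub)stochastic-type matrix — whence $\|Mx\|_1 \le \|x\|_1$. Applying this to $x = s' - s$ with $\|x\|_1 \le k\|x\|_\infty \le k$ finishes the proof.

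The main obstacle is establishing entrywise nonnegativity of $(I+L)^{-1}$ cleanly; I would handle it by the Neumann-series argument above, writing $(I+L)^{-1} = \sum_{t \ge 0} Q^t (I+D_{\mathrm{deg}})^{-1}$ with $Q = (I+D_{\mathrm{deg}})^{-1}A \ge 0$ and spectral radius $< 1$ (guaranteed since $G$ has no isolated vertices, or more simply since $\|Q\|_\infty < 1$ when every vertex has positive degree — and isolated vertices only make things easier, as the corresponding coordinate of $z$ just equals $s_i$). Every term in the series is entrywise nonnegative, so $M \ge 0$. Everything else — symmetry, the $\vec 1$ eigenvector computation — is immediate, so once nonnegativity is in hand the proof is short.
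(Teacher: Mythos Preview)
Your proof is correct and follows essentially the same route as the paper's: both arguments use that $(I+L)^{-1}$ has nonnegative entries, that its column sums equal $1$ (via symmetry of $L$ together with $(I+L)^{-1}\vec{1}=\vec{1}$), and that $\|s'-s\|_1 \le k$. The only difference is packaging --- you phrase it as an $L_1\to L_1$ operator-norm bound, while the paper writes out the double sum explicitly --- and you supply a Neumann-series justification for entrywise nonnegativity that the paper simply asserts.
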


\begin{proof}
By~\eqref{eq:z} we have that 
\[
\left\| z' - z \right\|_{1} 
= \left\| \left( I + L \right)^{-1} \left( s' - s \right) \right\|_{1} 
\leq \sum_{i=1}^{n} \sum_{a=1}^{n} \left| \left( s'_{a} - s_{a} \right) \left( I + L \right)^{-1}_{ia} \right| 
= \sum_{i=1}^{n} \sum_{a=1}^{n} \left| s'_{a} - s_{a} \right| \left( I + L \right)^{-1}_{ia}, 
\]
where the inequality is due to the triangle inequality and the final equality is because the entries of $\left( I+L \right)^{-1}$ are nonnegative. 
Without loss of generality, assume that nodes $1, \ldots, k$ comprise the set of nodes taken over by the adversary. 
Since $s_{i} \in \left[ 0, 1 \right]$, we must have $\left| s_{i}' - s_{i} \right| \leq 1$ for all $i$. Thus 
\[
\left\| z' - z \right\|_{1}  
\leq 
\sum_{i=1}^{n} \sum_{a=1}^{k} \left| s'_{a} - s_{a} \right| \left( I + L \right)^{-1}_{ia} 
\leq 
\sum_{i=1}^{n} \sum_{a=1}^{k} \left( I + L \right)^{-1}_{ia}.
\]
Now interchanging the order of summation we have that 
\[
\sum_{i=1}^{n} \sum_{a=1}^{k} \left( I + L \right)^{-1}_{ia} 
= 
\sum_{a=1}^{k} \sum_{i=1}^{n} \left( I + L \right)^{-1}_{ia} 
= \sum_{a=1}^{k} 1 = k.
\]
Here we used the fact that the column sums of $\left( I + L \right)^{-1}$ are all equal to $1$, which follows from the fact that $\left( I + L \right)^{-1} \vec{1} = \vec{1}$ (shown in Section~\ref{sec:convexity}) and that $\left( I + L \right)^{-1}$ is symmetric. 
\end{proof}

\subsection{Bound on the increase in polarization}

\begin{proof}[Proof of Theorem~\ref{thm:P_bound}] 
We first rewrite $P'$ in a way to make $P$ appear. This can be done by adding and subtracting under the square, and then expanding the square: 
\begin{align}
P' &= \sum_{i=1}^{n} \left( z_{i}' - \overline{z}' \right)^{2} 
= \sum_{i=1}^{n} \left( z_{i}' - z_{i} + z_{i} - \overline{z} + \overline{z} - \overline{z}' \right)^{2} \notag \\
&= P + \sum_{i=1}^{n} \left( z_{i}' - z_{i} \right)^{2} + n\left( \overline{z} - \overline{z}' \right)^{2} 
+ 2 \sum_{i=1}^{n} \left( z_{i}' - z_{i} \right) \left( z_{i} - \overline{z} \right) \notag \\ 
&\quad + 2 \sum_{i=1}^{n} \left( z_{i}' - z_{i} \right) \left( \overline{z} - \overline{z}' \right) 
+ 2 \sum_{i=1}^{n} \left( z_{i} - \overline{z} \right) \left( \overline{z} - \overline{z}' \right). \label{eq:P'_expanded2}
\end{align}
Since $\sum_{i=1}^{n} \left( z_{i} - \overline{z} \right) = 0$, 
the last term in~\eqref{eq:P'_expanded2} is zero. 
Since $\sum_{i=1}^{n} \left( z_{i}' - z_{i} \right) = n \left( \overline{z}' - \overline{z} \right)$, 
the first term in~\eqref{eq:P'_expanded2} is equal to 
$-2n \left( \overline{z} - \overline{z}' \right)^{2}$. 
Plugging this back into the display above we obtain that 
\begin{equation}\label{eq:P'_simplified}
P' = P + \sum_{i=1}^{n} \left( z_{i}' - z_{i} \right)^{2} 
+ 2 \sum_{i=1}^{n} \left( z_{i}' - z_{i} \right) \left( z_{i} - \overline{z} \right)
- 2n\left( \overline{z} - \overline{z}' \right)^{2}. 
\end{equation}
The last term in~\eqref{eq:P'_simplified} is nonpositive, so we may drop it. 
For the first sum in~\eqref{eq:P'_simplified}, note that $z_{i} \in \left[ 0, 1 \right]$ for every $i \in \left[ n \right]$, 
so $\left( z_{i}' - z_{i} \right)^{2} \leq \left| z_{i}' - z_{i} \right|$. 
Together with Lemma~\ref{lem:z_diff_L1} this shows that 
$\sum_{i=1}^{n} \left( z_{i}' - z_{i} \right)^{2} \leq k$. 
Finally, for the other sum in~\eqref{eq:P'_simplified}, using the bound $\left| z_{i} - \overline{z} \right| \leq 1$ we have that 
$\sum_{i=1}^{n} \left( z_{i}' - z_{i} \right) \left( z_{i} - \overline{z} \right) 
\leq \sum_{i=1}^{n} \left| z_{i}' - z_{i} \right| 
\leq k$. 
Altogether this shows that $P' \leq P + 3k$ as desired. 
\end{proof}

\subsection{Bound on the increase in disagreement}

\begin{proof}[Proof of Theorem~\ref{thm:D_bound}] 
We start by rewriting $D'$ in a way to make $D$ appear. This can be done by adding and subtracting under the square, and then expanding the square. In the following all summations over $i$ and $j$ go from $1$ to $n$, so we do not write this out further. 
\begin{align*}
D' &= \sum_{i,j} w_{i,j} \left( z_{i}' - z_{j}' \right)^{2} 
= \sum_{i,j} w_{i,j} \left( z_{i}' - z_{i} + z_{i} - z_{j} + z_{j} - z_{j}' \right)^{2} \\
&= D + \sum_{i,j} w_{i,j} \left\{ \left( z_{i}' - z_{i} \right)^{2} + \left( z_{j}' - z_{j} \right)^{2} + 2 \left( z_{i}' - z_{i} \right) \left( z_{j} - z_{j}' \right) + 2 \left( z_{i} - z_{j} \right) \left( z_{i}' - z_{i} + z_{j} - z_{j}' \right) \right\}.
\end{align*}
We now bound the four sums above. The first two sums are equal by symmetry, and we have that 
\begin{align*}
\sum_{i,j} w_{i,j} \left\{ \left( z_{i}' - z_{i} \right)^{2} + \left( z_{j}' - z_{j} \right)^{2} \right\} 
&= 2 \sum_{i,j} w_{i,j} \left( z_{i}' - z_{i} \right)^{2} 
= 2 \sum_{i} d_{i} \left( z_{i}' - z_{i} \right)^{2} \\
&\leq 2 d_{\max} \sum_{i} \left| z_{i}' - z_{i} \right| 
\leq 2 d_{\max} k,
\end{align*}
where we used Lemma~\ref{lem:z_diff_L1} for the last inequality 
and the fact that $\left| z_{i}' - z_{i} \right| \in [0,1]$ in the inequality before that. 
Next, using the inequality 
$\left( z_{i}' - z_{i} \right) \left( z_{j} - z_{j}' \right) \leq \left| z_{i}' - z_{i} \right|$ 
we have that 
\[
2 \sum_{i,j} w_{i,j} \left( z_{i}' - z_{i} \right) \left( z_{j} - z_{j}' \right) 
\leq 2 \sum_{i,j} w_{i,j} \left| z_{i}' - z_{i} \right| 
\leq 2 d_{\max} k.
\]
Finally, 
we use the bound 
$\left( z_{i} - z_{j} \right) \left( z_{i}' - z_{i} + z_{j} - z_{j}' \right) 
\leq \left| z_{i}' - z_{i} \right| + \left| z_{j}' - z_{j} \right|$ to obtain that 
\begin{align*}
2 \sum_{i,j} w_{i,j} \left( z_{i} - z_{j} \right) \left( z_{i}' - z_{i} + z_{j} - z_{j}' \right) 
&\leq 
2 \sum_{i,j} w_{i,j} \left( \left| z_{i}' - z_{i} \right| +  \left| z_{j} - z_{j}' \right| \right) \\
&= 
4 \sum_{i,j} w_{i,j} \left| z_{i}' - z_{i} \right| 
\leq 4 d_{\max} k.
\end{align*}
Putting everything together we obtain that $D' \leq D + 8 d_{\max} k$ as desired. 
\end{proof}

\section{Algorithms for the adversary} \label{sec:algorithms} 

In this section, we first discuss several efficient algorithms and heuristics for the adversary to select vertices and set their opinions. We then present results of their performance on synthetic networks generated according to three common probabilistic models: Erd\H{o}s-R\'enyi random graphs, the preferential attachment model, and the stochastic block model. Finally, we evaluate the heuristics on the Twitter and Reddit data sets that were collected in~\cite{de2014} and subsequently studied in~\cite{musco2018minimizing}. 

\subsection{Algorithms and heuristics}\label{subsec:heuristics}

We present six adversarial heuristics that are designed under varying levels of information available about the network structure and opinions. 
We start with a natural greedy algorithm and then turn to other simpler heuristics.

\textbf{Greedy Algorithm.} In light of our analysis on convexity, we propose an efficient greedy algorithm for selecting $s'$. In this algorithm, we maintain a set $\Omega$ of vertices that have already been selected by the adversary, whose modified opinion values are reflected in our update of $s'$. In each iteration $i$, we select a vertex and set its opinion to $0$ or $1$ to result in the greatest increase in the objective function, given that the last $i - 1$ opinions have already been picked and modified according to this algorithm.  We then add this vertex to $\Omega$, update $s'$, and repeat this method a total of $k$ times. If no modification results in an increase in the objective function at the $i$th iteration, with $i < k$, then we stop the procedure. 
Algorithm \ref{alg:greedy} describes in pseudocode our greedy heuristic for determining $s'$, which can be used on any of our three objective functions using the notation $f \in \{P, D, P + \lambda \frac{n}{m} D \}$. 
\emph{Note:} with a slight abuse of notation, 
we write simply $f(s)$ to denote the objective function $f$ applied to the equilibrium opinions $z$ that are obtained from the innate opinions $s$.

\begin{algorithm}
\caption{Greedy heuristic for budgeted disruption maximization}
\begin{algorithmic}
    \Procedure{Greedy}{$f \in \{P, D, P + \lambda \frac{n}{m}D \}$, $s$}
\label{alg:greedy}
    	\State $s' = s$
    	\State $F' = f(s')$
    	\State $\Omega = \emptyset$   
    	\For{$i = 1 $ to $k$}
    		\State $s^{(i)} = s'$
    	    \State $j^* = \underset{j \notin \Omega}{\text{argmax}} \; \underset{a \in \{0, 1\}}{\max} \{f(s^{(i)}): s_j^{(i)} = a\}$
    		\State $a^* = \underset{a \in \{0, 1\}}{\text{argmax}} \{f(s^{(i)}): s_{j^*}^{(i)} = a \}$
    		\State $s_{j^*}^{(i)} = a^*$
    		\If{$ f(s^{(i)}) \ge F'$}
                \State $s_{j^*}' = a^*$
                \State $F' = f(s')$
                \State $\Omega = \Omega \cup \{j^*\}$
            \Else
                \State \textbf{break}
    		\EndIf
        \EndFor
        \Return $s'$
    \EndProcedure
\end{algorithmic}
\end{algorithm}

To compute any value of $f$, we need the matrix $(I + L)^{-1}$, which is calculated once at the beginning of the algorithm. This takes $\mathcal{O}(n^{2.3727})$ time using optimized Coppersmith-Winograd-like algorithms~\cite{williams2012}. Excluding this one-time inevitable computation, the runtime of Algorithm~\ref{alg:greedy} is $\mathcal{O}(nk)$. In comparison, a brute force algorithm would have runtime of $\mathcal{O}\left(\binom{n}{k} 2^k \right)$. 

\paragraph{Other Heuristics.} While Algorithm~\ref{alg:greedy} may seem to be the method most in accordance with our problem setup, it requires us to calculate $f(s^{(i)})$ with each element of $s^{(i)}$ changed to $0$ and then~$1$. This algorithm requires full knowledge of the Laplacian $L$---and thus the full graph $G$---and the network's innate opinions $s$. This may not be practically feasible. We hence explore several other heuristics, applicable for any of the three objective functions, that are computationally simpler and may be possible for the adversary to use even with limited information. We set up each heuristic with the same notions of $\Omega$ and $s'$, and each heuristic has two separate parts---picking the vertex and then setting the opinion---that are invoked at each iteration from $1$ to $k$. 

\begin{itemize}
\item \textsc{Mean Opinion}: First, we select the index $j^*$ such that
$$j^* = \underset{j \notin \Omega}{\text{argmax}} \Big|s_j' - \frac{1}{n}\sum_{i = 1}^n s_i'\Big|.$$

In words: among opinions that have not been changed yet, we choose the vertex whose opinion is closest to the current network's average opinion to be $j^*$. Second, we must change the opinion of $s_{j^*}'$ to $0$ or $1$. To do this, we optimize and set $s_{j^*}' = a^* = \underset{a \in \{0, 1\}}{\text{argmax}} \{f(s'): s_{j^*}' = a \}$. 

Note that the first step of the heuristic, which involves a larger decision space than the second, does not require any knowledge of the network structure or $L$, which can be the case in practice when edges are unknown; e.g., hiding a followers or friends list. Furthermore, if the adversary only has a rough idea of the nodes' opinions on average, this heuristic is intuitive: pick a ``centrist'' node with the most neutral opinion. The case when the adversary does not even have sufficient information about deciding the opinion $a^*$ leads us to define the next heuristic. 

\item \textsc{Mean Opinion (randomized)}: This heuristic is similar to the \textsc{Mean Opinion} heuristic, except the second step is replaced with randomly picking $s_{j^*}'$ to be equal to $0$ or $1$ with equal probability. This procedure can thus be entirely performed without knowledge of the underlying graph.

\item \textsc{Max Degree}: First, we select the index $j^*$ such that
\[
j^* = \underset{j \notin \Omega}{\text{argmax}} \sum_{i = 1}^n \ind{w_{i,j} > 0}.
\]

In words: we choose our vertex to be the one that is connected to the most other vertices in the network. Second, we set $a^* = \underset{a \in \{0, 1\}}{\text{argmax}} \{f(s'): s_{j^*}' = a \}$, as in \textsc{Mean Opinion}. 

Note that the first step of this heuristic, in contrast with \textsc{Mean Opinion}, does not require any knowledge of the innate opinion vector $s$ but rather exploits the network structure in a simple way. This heuristic is practical for adversaries that have access to the graph underlying the social network but may not have the means or data necessary to deduce what the opinions in the network are.

\item \textsc{Max Weighted Degree}: This heuristic is similar to \textsc{Max Degree}, except in the first step we choose $j^*$ whose sum of edge weights is the largest:
\[
j^* = \underset{j \notin \Omega}{\text{argmax}} \sum_{i = 1}^n w_{i,j}.
\]

\item \textsc{Random}: First, we select an opinion $j^* \notin \Omega$ to change uniformly at random. Second, we set $s_{j^*}'$ to either $0$ or $1$ with equal probability. 
This completely random algorithm offers a natural baseline to compare against. 
\end{itemize}

\subsection{Synthetic Experiments}

We evaluate the heuristics defined in Section~\ref{subsec:heuristics} over synthetic networks generated using three probabilistic models: the  Erd\H{o}s-R\'enyi model, the preferential attachment model~\cite{BA99}, and the stochastic block model~\cite{Abbe_survey}. The goal is to examine the empirical increase in disagreement and polarization that can be created by adversarial disruption in a variety of networks. Our empirical results show that several of the heuristics are able to increase the disagreement, polarization, and weighted sum of a network linearly in the budget $k$, which is the same order of growth as the upper bounds in Theorems~\ref{thm:P_bound} and~\ref{thm:D_bound}. For the weighted sum objective, we set $\lambda = 1$ in order to weigh polarization and disagreement equally.

Our empirical method is as follows. For each of the three models, we generate a network with $n = 1000$ vertices. Weights on the edges are randomly set to be values in $(0, 1]$ (and nonedges have zero weight). We experiment with $k$ in the range $0 \leq k \le n/2$, assuming it is unrealistic for an adversary to be able to change the opinions of more than half of the network. For each iteration until $n/2$, we plot the disagreement, polarization, and weighted sum when the adversary disrupts the network according to the six heuristics presented.

\paragraph{Erd\H{o}s-R\'enyi model.} 
In the Erd\H{o}s-R\'enyi model every pair of nodes is connected independently with some probability $p \in [0,1]$. This model serves as a natural null model for random graphs, with no underlying structure. In Figure~\ref{fig:erdos-renyi} we take $p = 0.2$; other values of $p$ show qualitatively similar behavior. 
We set the innate opinion vector $s$ to have i.i.d.\ values which are uniformly distributed in $[0, 1]$. 

Our results of the adversary determining $s'$ using the heuristics are shown in Figure \ref{fig:erdos-renyi}. We observe that all three objective functions are increasing roughly linearly in $k$, with Algorithm~\ref{alg:greedy} (greedy heuristic) performing the best. We also see that \textsc{Mean Opinion} and \textsc{Mean Opinion (Randomized)}, the two heuristics that exploit the innate opinion vector, are better than \textsc{Max Degree} and \textsc{Max Weighted Degree}, which exploit network structure. In fact, the latter two heuristics only appear slightly better than \textsc{Random} for all three objectives.

\begin{figure}[h!]
    \centering
    \includegraphics[scale=0.55]{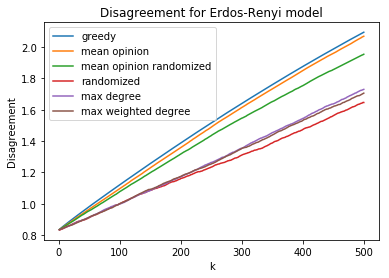}
    \includegraphics[scale=0.55]{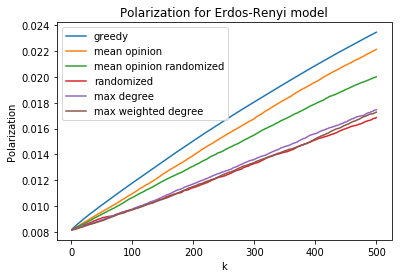}
    \includegraphics[scale=0.55]{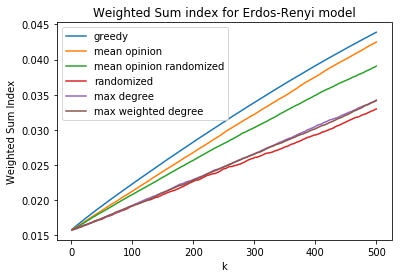}
    \caption{Performance of network disruption heuristics under the Erd\H{o}s-R\'enyi model with $p = 0.2$.}
    \label{fig:erdos-renyi}
\end{figure}

\paragraph{Preferential attachment model.} More realistic networks can be constructed with the preferential attachment process~\cite{BA99}. While the  Erd\H{o}s-R\'enyi random graph serves as a natural null model for a network with no structure, the preferential attachment process instead follows the natural concept that vertices that are more connected will receive more edges in the future. This is often true in social networks; for instance, new accounts on a social media platform are perhaps more likely to follow a popular account rather than a less known one. 
While the degree distribution of an Erd\H{o}s-R\'enyi random graph is binomial, preferential attachment graphs have a power-law degree distribution, which is often observed in real-world networks. 
We choose to generate a network using a preferential attachment process with parameter $m = 5$, meaning that at each time step, a new vertex is connected to $m$ existing nodes with a probability proportional to the degree of existing vertices. We again set the innate opinion vector $s$ to have i.i.d.\ values which are uniformly distributed in $[0, 1]$.

Our results are shown in Figure \ref{fig:pref}. Relatively, the greedy algorithm still has the best performance, followed by \textsc{Mean Opinion} and \textsc{Mean Opinion (Randomized)} for the $k$ defined in the synthetic experiments, and all heuristics seem to scale linearly in this range of $k$. We observe, however, that while \textsc{Max Degree} and \textsc{Max Weighted Degree} start out worse than pure randomization, they appear eventually to surpass \textsc{Random} and increase at a rate faster than other heuristics. Lastly, we observe that the scale of the objectives is significantly larger to start with than our results in Figure \ref{fig:erdos-renyi} using the Erd\H{o}s-R\'enyi model, perhaps due to the Erd\H{o}s-R\'enyi graph being much denser.

\begin{figure}[h]
    \centering
    \includegraphics[scale=0.55]{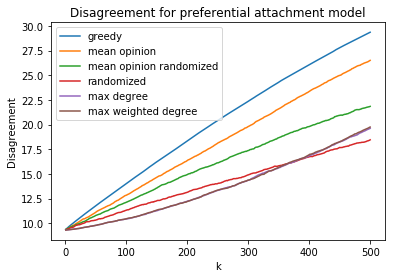}
    \includegraphics[scale=0.55]{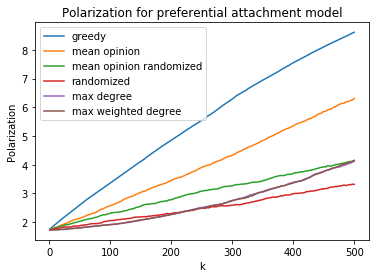}
    \includegraphics[scale=0.55]{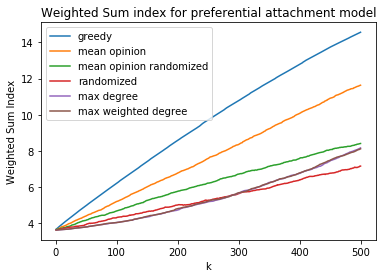}
    \caption{Performance of network disruption heuristics under the preferential attachment model with $m = 5$.}
    \label{fig:pref}
\end{figure}

\paragraph{Stochastic block model.} The stochastic block model \cite{Abbe_survey} is able to represent planted clusters, unlike the prior two models. These sorts of communities often arise in social networks, as seen in the Twitter data set \cite{musco2018minimizing} which we further discuss in Section~\ref{subsec:twitter_reddit}. We define two communities by partitioning $V$ into two sets $C_1$ and $C_2$, each of size $n/2$. Let the connectivity within both communities have parameters $p_{11} = p_{22} = 0.7$, that is, vertices within the communities share an edge with probability $0.7$ (independently across pairs), and let the connectivity between the two communities have parameter $p_{12} = 0.1$. Moreover, these communities often have different opinion distributions. Therefore, in our experiments we set the innate opinions $s_v$ for $v \in C_1$ to be independent draws from the $\Beta(5, 2)$ distribution, while the opinions of $s_v$ for $v \in C_2$ are i.i.d.\ $\Beta(2, 5)$. This means that opinions in $C_1$ are biased towards $1$, and opinions in $C_2$ are biased towards $0$. 
Experiments with different parameters show similar qualitative behavior as those discussed below. 

Our results are shown in Figure \ref{fig:sbm}. Similar to the two previous results, the greedy algorithm and \textsc{Mean Opinion} perform the best across the three objectives, increasing linearly in $k$. However, \textsc{Random} actually decreases the value of all three, and \textsc{Mean Opinion (Randomized)} decreases for polarization and the weighted sum. We conjecture that this is because choosing between $0$ and $1$ heavily depends on which community $j^*$ is in due to how the innate opinions are generated using two beta distributions rather than just a uniform distribution over $[0, 1]$.

\begin{center}
    \begin{figure}[h]
        \centering
        \includegraphics[scale=0.55]{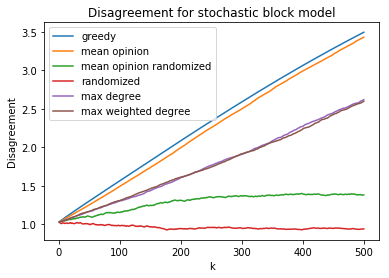}
        \includegraphics[scale=0.55]{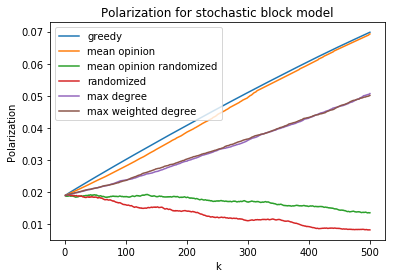}
        \includegraphics[scale=0.55]{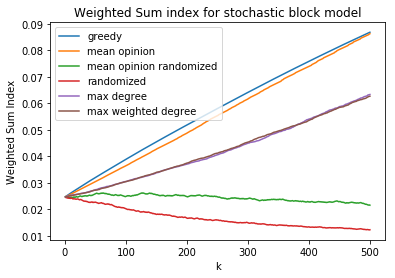}
        \caption{Performance of network disruption heuristics under the stochastic block model with $p_{11} = p_{22} = 0.7$, $p_{12} = 0.1$, and opinions distributed according to $\text{Beta}(5, 2)$ and $\text{Beta}(2, 5)$.}
        \label{fig:sbm}
    \end{figure}
\end{center}

\subsection{Analysis of Reddit and Twitter data sets} \label{subsec:twitter_reddit}

We now evaluate our proposed heuristics on two real data sets used in \cite{musco2018minimizing}. These data sets, one on Twitter and one on Reddit, contain the edgeset for the social networks as well as the list of opinions of the users over time. We pick the innate opinion vector to be the most recently recorded opinion vector, which is also how \cite{musco2018minimizing} chooses innate opinion vectors. They were originally collected by~\cite{de2014} by tracking interactions between users and using NLP techniques to map text to opinions.

\paragraph{Twitter.} This network has $n = 548$ vertices and $m = 3638$ edges, where the vertices represent the individuals tweeting over a certain time period about a debate on the Delhi legislative assembly elections of $2013$ (identified by a set of hashtags), and their opinions correspond to the sentiment of the tweets. Each edge is an undirected interaction between users.

Our results are shown in Figure \ref{fig:twitter}. Algorithm~\ref{alg:greedy} (greedy heuristic) and \textsc{Mean Opinion} still have the largest increases in all three objectives for this real data set. On the other hand, \textsc{Mean Opinion (Randomized)} and \textsc{Random} perform poorly, with \textsc{Max Degree} and \textsc{Max Weighted Degree} eventually outperforming the former two for all three objectives. This relative ordering of heuristic performance is similar to that of the stochastic block model discussed previously. In fact, when the Twitter edgeset is visualized, we can see that there are two main communities, and a third smaller and less dense community. Therefore, we can attribute a lot of the  performance results to the community structure. However, the distribution of innate opinions does not follow two beta distributions, but instead is approximately Gaussian with mean $0.602$ and standard deviation~$0.08$, which mitigates the decrease in performance that results from randomly setting $a^*$ amid beta-distributed opinions.

\begin{figure}[h]
    \centering
    \includegraphics[scale=0.55]{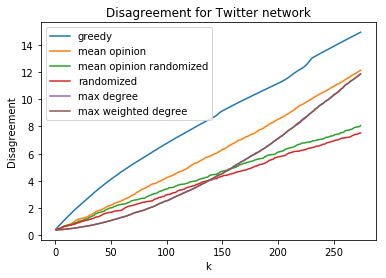}
    \includegraphics[scale=0.55]{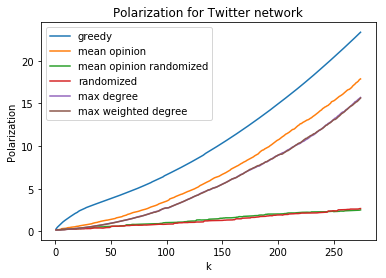}
    \includegraphics[scale=0.55]{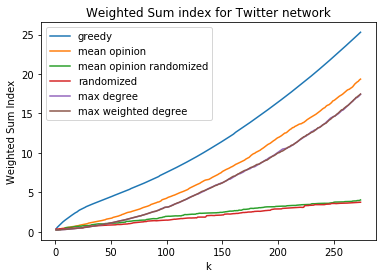}
    \caption{Performance of network disruption heuristics on a Twitter data set \cite{musco2018minimizing}.}
    \label{fig:twitter}
\end{figure}

In Table \ref{table:twitter}, we list the exact values for disagreement, polarization, and their weighted sum ($\lambda = 1$) of the Twitter network when the adversary uses the greedy algorithm, at the start of the algorithm ($k=0$) and when $k$ is equal to $20, 50, 100, 200$. This table suggests that, even if the adversary can only change the opinions on $20$ accounts (approximately $3.6\%$ of the nodes), the disagreement in the network increases by over $4$ times, while the polarization and weighted sum increase by over $7$ times.

\begin{table*}[h]
\centering
\begin{tabular}{|c|c|c c c c|}
     \hline 
     Objective & Original & $k = 20$ & $k = 50$ & $k = 100$ & $k = 200$  \\
     \hline
     Disagreement & $0.48$ & $2.12$ & $4.17$ & $6.81$ & $11.20$ \\
     Polarization & $0.29$ & $2.34$ & $3.89$ & $6.70$ & $15.05$ \\
     Weighted Sum & $0.37$ & $2.66$ & $4.48$ & $7.59$ & $16.54$ \\
     \hline
\end{tabular}
\caption{Values of objective functions for the Twitter data set under greedy heuristic (Algorithm~\ref{alg:greedy}) at $k = 0 \, (\text{original}), 20, 50, 100$, and $200$. }
\label{table:twitter}
\end{table*}

\paragraph{Reddit.} This network has $n = 556$ vertices and $m = 8969$ edges, where the vertices represent individuals who have posted in a politics subreddit, and their opinions correspond to the sentiment in this subreddit over a certain time period. There is an edge between users if they both post in at least two other same subreddits. We also discard three vertices from this graph that are not connected to any other vertices, as keeping these vertices implies that heuristics can simply change these opinions to yield large increases in polarization without any consequences for the opinion dynamics.

Our results are shown in Figure \ref{fig:reddit}. Again, the greedy algorithm performs best, with a large increase especially for small $k$. While the graphs for polarization and for the weighted sum have very noticeable jumps, for all three objectives \textsc{Mean Opinion}, \textsc{Mean Opinion (Randomized)}, and \textsc{Random} perform similarly. We conjecture that random is not the worst in this case for two reasons: firstly, the Reddit data set's opinions roughly follow a normal distribution with mean $0.498$ and standard deviation $0.04$, meaning that the values are more tightly concentrated around a very neutral opinion than the Twitter data set. Moreover, the distribution of degrees of the vertices is more uniform than that of the Twitter data set, which appears to follow a power law instead, suggesting that arbitrarily choosing a vertex and then randomly setting its opinion can still result in good performance. 

\begin{figure}[h]
    \begin{center}
        \includegraphics[scale=0.55]{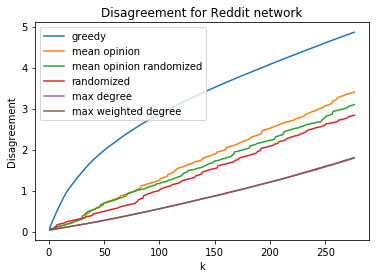}
        \includegraphics[scale=0.55]{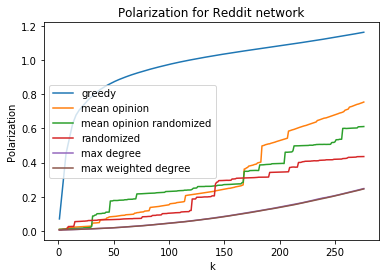}
        \includegraphics[scale=0.55]{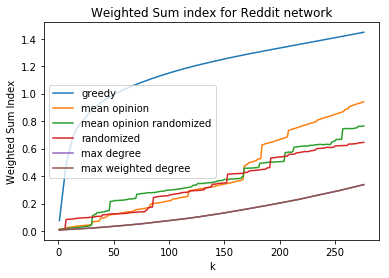}
    \end{center}
    \caption{Performance of network disruption heuristics on a Reddit data set \cite{musco2018minimizing}.}
    \label{fig:reddit}
\end{figure}

In Table \ref{table:reddit}, we list the exact values for disagreement, polarization, and their weighted sum ($\lambda = 1$)  of the Reddit network when the adversary uses the greedy algorithm at the start of the algorithm ($k=0$) and when $k$ is equal to $20, 50, 100, 200$. This table suggests that, even if the adversary can only change the opinions on $20$ accounts (approximately $3.6\%$ of the nodes), all objectives are able to increase roughly tenfold.

\begin{table*}[h]
\centering
\begin{tabular}{|c|c|c c c c|}
     \hline 
     Objective & Original & $k = 20$ & $k = 50$ & $k = 100$ & $k = 200$  \\
     \hline
     Disagreement & $0.09$ & $1.14$ & $2.00$ & $2.88$ & $4.09$ \\
     Polarization & $0.07$ & $0.72$ & $0.88$ & $0.98$ & $1.09$ \\
     Weighted Sum ($\lambda = 1$) & $0.08$ & $0.79$ & $0.99$ & $1.15$ & $1.33$ \\
     \hline
\end{tabular}
\caption{Values of objective functions for the Reddit data set under greedy heuristic (Algorithm~\ref{alg:greedy}) at $k = 0 \, (\text{original}), 20, 50, 100,$ and $200$.}
\label{table:reddit}
\end{table*}

\section{Conclusion and discussion} \label{sec:discussion} 

Our primary conceptual contribution in this paper is the introduction of a new, adversarial model of network disruption, detailed in the Introduction and illustrated in Figure~\ref{fig:schematic}. 
We investigated this model both theoretically (Sections~\ref{sec:convexity} and~\ref{sec:bounds}) and empirically (Section~\ref{sec:algorithms}). 
The key take-away from our results is that an adversary can significantly increase disagreement / polarization, even using simple, unsophisticated methods. 
This motivates further research into addressing these challenges. 

We conclude by highlighting some specific and some broad questions that our paper leaves open.

\begin{itemize}
    \item \textbf{Hardness of optimal network disruption?} As mentioned in the introduction, we conjecture that solving the optimization problem of the adversary is computationally hard when $k$ is large. Recent work of Gionis, Terzi, and Tsaparas~\cite{gionis2013opinion} on a related opinion maximization problem uses a reduction to vertex cover to show hardness; see also~\cite{abebe2018opinion} where this proof is adapted to another setting. Adapting this proof to our setting is challenging due to the different nature of our objective function, coupled with the opinion dynamics whose effect is difficult to isolate. 
    \item \textbf{Performance guarantees for the adversary?} In Section~\ref{sec:algorithms} we investigated empirically the performance of several natural algorithms for the adversary, on several different random graphs, as well as on Reddit and Twitter data sets. While performances varied, depending on the algorithm and the underlying social network, one thing that they all had in common was a linear growth in the objective function, as a function of the budget $k$. Is it possible to prove such a performance guarantee?
    \item \textbf{Other opinion dynamics?} We focused here on the Friedkin-Johnsen model of opinion dynamics, but everything we discussed can be studied under other models. How robust are the results to such changes?
    
    \item \textbf{Defense strategies?} Our empirical results in Section~\ref{sec:algorithms} show that the adversary does not have to be sophisticated in order to significantly disrupt the network. This highlights the need to think carefully about defense strategies that can counteract network disruption. For instance, is it possible to tackle network disruption by modifying the network itself (e.g., by carefully suggesting new edges to add)?
\end{itemize}


\section*{Acknowledgements}

We thank Cameron Musco, Christopher Musco, and Charalampos Tsourakakis for sharing the Reddit data set used in~\cite{musco2018minimizing} and for clarifying details of their data analysis. 


\bibliographystyle{plain}
\bibliography{bib}




\end{document}